\useunder{\uline}{\ul}{}
\newtheorem{theorem}{Theorem}[section]
\newtheorem{corollary}[theorem]{Corollary}
\newtheorem{definition}[theorem]{Definition}
\providecommand{\keywords}[1]
{
  \small	
  \textbf{\textit{Keywords---}} #1
}
\providecommand*{\Dashv}{%
  \mathrel{%
    \mathpalette\@Dashv\vdash
  }%
}
\newcommand*{\@Dashv}[2]{%
  \reflectbox{$\m@th#1#2$}%
}
\title{Implication Avoiding Dynamics for Externally Observed Networks}
\date{}        
\begin{document}
\maketitle

\centerline{\scshape Joel Nishimura and Oscar Goodloe}
\medskip
{\footnotesize
 \centerline{School of Mathematical and Natural Sciences}
   \centerline{ Arizona State University, Glendale, AZ 85306-4908, USA}
}

\begin{abstract}

Previous network models have imagined that connections change to promote structural balance, or to reflect hierarchies.  We propose a model where agents adjust their connections to appear credible to an external observer.  In particular, we envision a signed, directed network where positive edges represent endorsements or trust and negative edges represent accusations or doubt, and consider both the strategies an external observer might use to identify credible nodes and the strategies nodes might use to then appear credible by changing their outgoing edges.  First, we establish that an external observer may be able to exactly identify a set of `honest' nodes from an adversarial set of `cheater' nodes regardless of the `cheater' nodes' connections.  However, while these results show that an external observer's task is not hopeless, some of these theorems involve network structures that are NP-hard to find. Instead, we suggest a simple heuristic that an external observer might use to identify which nodes are not credible based upon their involvement with particular implicating edge motifs. Building on these notions, and analogously to some models of structural balance, we develop a discrete time dynamical system where nodes engage in implication avoiding dynamics, where inconsistent arrangements of edges that cause a node to look 'suspicious' exert pressure for that node to change edges. We demonstrate that these dynamics provide a new way to understand group fracture when nodes are worried about appearing consistent to an external observer. 

\end{abstract}

\keywords{
trust on networks, network dynamics, structural balance, signed directed network, social networks} 

\section{Introduction}
\label{sec:intro}
Social networks often capture a large and varied range of interactions or relationships and are similarly created by a variety of different forces. Two well-known such forces are: structural/social balance \cite{cartwright1956structural,heider1946attitudes}, where a friend of a friend and an enemy of an enemy each become friends while a friend of an enemy and an enemy of a friend become enemies; and status \cite{leskovec2010signed,leskovec2010predicting}, where one esteems those of higher rank and disparages those of lower rank. Both structural balance and status describe forces that shape the construction of signed and signed directed graphs respectively. We propose another possible force, credibility, where contradictory judgments lead a node to appear suspicious, and we use this force to suggest an `implication avoiding' network dynamics.   

An investigation of credibility begs the question: credibility according to whom? In order to develop a theory of credibility we imagine the following setting: 
individuals endorse and accuse each other, and from this, an external observer will make decisions regarding which nodes are credible.
For instance, imagine a portion of a the class has cheated on an exam, and the instructor, interested in discovering the cheaters, privately asks students to endorse and/or accuse the integrity of their classmates. Alternatively, one could imagine any of the many settings where an investigator or new-comer arrives at a network and must determine credibility based on a web of endorsements and accusations. How could such an external observer determine credibility, and how might individuals behave in order to manipulate that determination?

To address these questions we consider the idealized scenario of a directed signed network made of endorsements (positive edges) and accusations (negative edges), where nodes are either `honest' or are adversarial `cheaters' and the external observer's goal is to correctly identify the honest nodes based solely on the network structure. We further assume that honest nodes will correctly endorse other honest nodes and only accuse cheaters, but the cheaters are free to endorse or accuse any node in an attempt to confuse the external observer. 
This idealized setting allows us to clearly formulate conditions when an observer can exactly identify the honest nodes regardless of the strategies of the cheaters. 

While this work establishes that the observer's task is not wholly unreasonable, it is unlikely that a person would casually implement the strategies implicit in our theorems. In contrast, we note that there are five small inconsistent motifs that immediately implicate a node as belonging to $C$. Since checking small motifs requires neither great insight nor computation, this simple heuristic suggests an `implication avoiding dynamics' (IAD) where nodes attempt to avoid being implicated by inconsistent motifs. 

While we have motivated IAD by considering the idealized setting where `honest' nodes make no errors, the internal logic of IAD should remain pressing whenever nodes are concerned about an external observer checking for inconsistent motifs, regardless of whether `honest' nodes really are error-free or if there is even a ground truth preferred group at all. More generally, the assumptions of IAD may well apply to situations where an observer expects consistency from any sort of `in-group,' and the nodes wish to be labeled as in that group.

Implication avoiding dynamics predicts sparse endorsements among an initially inconsistent out-group, and community fracture, where even a single internal accusation has a chance to cascade into a web of accusations; both are behaviors inexplicable to structural balance and status models. 

Indeed, IAD provides an interesting complement to existing theories of structural balance dynamics. While structural balance, which operates on undirected graphs, is fundamentally a process which involves dyadic coordination (positive edges imply mutual approval), IAD involves only unilateral changes, where a node changes an outgoing directed edge. From this perspective, IAD describes a faster and externally instigated set of dynamics, while structural balance describes a slower internally motivated set of dynamics. These differences manifest themselves in the possible stationary/balanced states.
For instance, in structural balance the only balanced complete graphs are those with exactly two\footnote{One faction may be empty.} opposing factions, while IAD can support any number mutually opposing factions, some of which would be considered jammed states in structural balance \cite{marvel2009energy,antal2005dynamics}. Similarly, IAD is able to sustain more structures in sparse graphs than structural balance, either in it's triangle formulation or its more general cycle definition \cite{wasserman1994social}. One reason for these differences is that in traditional structural balance a triangle with three negative edges is unbalanced, while it is not in IAD, consistent with recent empirical work where such triangles are rare \cite{rawlings2017structural,isakov2019structure} and work which consider whether structural balance should be split up into a collection of different sub-dynamics \cite{rawlings2017structural}.   
Other work in structural balance has formulated an elegant continuous extension: $\frac{dX}{dt}=X^2$ \cite{marvel2011continuous}, more strategic notions of single time step dynamics \cite{van2011micro} and connections with monotonic dynamical systems \cite{10.1371/journal.pone.0038135}, all of which could be fruitful directions for further study of IAD.

Another complementary set of dynamics is that of status theory \cite{leskovec2010predicting,leskovec2010signed}. In comparison to IAD, status theory imagines a hierarchical node embedding where positive edges are directed up the hierarchy and negative edges are directed down it. In status theory, a reciprocated edge should have one negative direction and on positive direction (according to which direction has increasing status), while IAD suggests this is the least stable motif and instead predicts that edges should be reciprocated in kind.     

Our work also has connections with trust-related algorithms. In many online settings, determining the trustworthiness of nodes in a network is a practical problem that platform designers must address, and this has led to a number of algorithmic approaches. In contrast to our setting, the online setting typically cannot assume that there is a core or consistent set of honest nodes, yet they are similarly worried about the effects of strategic adversaries. There is some evidence that the approaches which are good at aggregating information are typically susceptible to manipulation and vice versa \cite{tang2010hybrid}. One kind of attack that is particularly important in online settings is the creation of new accounts created solely to bolster the reputation or trust of another user.  These so-called `sybil' accounts can present a distinct challenge to otherwise classical approaches, and have necessitated a series of developments specific for this threat  \cite{cheng2005sybilproof,resnick2009sybilproof,liu2016personalized}. While we explicitly frame our question in terms of adversarial nodes, we do not consider manipulations which create new nodes, such as creating sybils, and thus our results are less relevant for online settings where there are negligible transaction costs. 

Other recent approaches consider iterative notions of trust, such as the `bias-deserve' model \cite{mishra2011finding}, the `fairness-goodness' model \cite{kumar2016edge} and its extension \cite{kumar2018rev2}. An interesting aspect of these approaches is that they split notions of trust into two approximately orthogonal notions, roughly corresponding to whether one can trust a node's outgoing edges and how one should interpret a node's incoming edges. While these methods perform very well at predicting edges in cross validation tasks, they require additional node level information in order to distinguish strategic malicious users from ordinary ones, as explicitly done in \cite{kumar2018rev2}. 

Ideally, these works on network dynamics and trust will expand the vocabulary with which the various phases of growth and evolution of networks can be described. Towards that end, IAD plays an important role as a description and explanation of community fracture, especially when that fracture is instigated by an external event, such as external observation.

\section{A Network Implication Game}
\label{sec:implication_game}
We consider a setting where an `observer' $O$, attempts to correctly classify the members of $V = \{H, C\}$ composed of `honest' players $H$, and `cheaters' $C$, where members of both $H$ and $C$ aim to be classified as members of $H$. In this game, each member $u$ of $V$ produces some number of directed edges of the form $u\to v$, which represent endorsements, and directed edges of the form $u\Dashv v$, which represent accusations. Members of $H$ are constrained to only endorse other members of $H$ and accuse members of $C$, while members of $C$ have no such constraints. Together, the members of $C$ and $H$ create a signed network $G=\{V,E,A\}$, with nodes $V$, positive endorsements $E$ and negative accusations $A$. Using only the structure of $G$ the observer attempts to infer which players belong to $C$. 

In this game, the most pressing question is whether it is ever possible for the administrator to identify $H$, or if instead, strategic behavior of $C$ can always confuse the administrator. Indeed, if $|C|\ge|H|$ then a subset of $C$ can simply mirror the structure of $H$, rendering $H$ indistinguishable from their doppelgangers in $C$. Thus if $|C|\ge|H|$, then $C$ will not be distinguishable by an observer, at least one whose only information is the network structure. 
We thus restrict our attention to the scenario where $|H|>|C|$. The following definitions clarify the distinguishing features of $H$.

\begin{definition}
A set $Q$ is self-consistent if there does not exist $u,v \in Q$ such that $u \Dashv v$.
\end{definition}

\begin{definition}
A set $Q$ is insular if there does not exist any $u \in Q$, $v \notin Q$ such that $u \to v$.
\end{definition}

Notice, that $H$ is always self-consistent and insular by assumption, but it may not be the only or the largest such set. In the following theorems we explore the sufficient conditions on the edges of $H$ such that $H$ is the largest self-consistent and/or insular set.  

To begin, let $G_H$ and $G_C$ be the implied subgraphs of $G$, composed of nodes in $H$ and $C$ respectively, and let $G^\to$ and $G^\dashv$ be the subgraphs of $G$ with only endorsements (deleting accusations) and accusations (deleting endorsements) respectively. Additionally, for some set $U$, let $\rho(U)$ be the set of endorsement-upstream nodes such that: $v \in \rho(U)$ if and only if there exists a path of endorsements $v\to\ldots \to u \in U$. Let $\rho_H(U)$ denote the nodes endorsement-upstream of $U$ in $G_H$, implying that $\rho_H(U)\subseteq H$ if $U\subseteq H$. Similarly, let $\sigma(U)$ denote the endorsement downstream nodes such that: $v\in \sigma(U)$ if and only if there exists a path of endorsements $u \to\ldots \to v$ for some $u\in U$. 

Notice that with this notation that $Q$ is insular if and only if $\sigma(Q)\subseteq Q$.  Similarly, if $Q$ is insular, $u\in Q$ and $u\in \rho(v)$, then $v\in Q$. This assertion also yields the contrapositive statement: if $Q$ is insular and $u\not \in Q$ then $\rho(u)\cap Q = \varnothing$. Thinking in these terms suggests the following theorem.

\begin{theorem}
If $|H|>|C|$ and $G_H$ contains a strongly connected component $U$ such that $|U|>|C|$ and $H\subseteq \sigma(U)$, then the observer can exactly determine $H$ and $C$. \label{thm:errorFreeStrong}
\end{theorem}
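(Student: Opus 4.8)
The plan is to show that the observer can identify $H$ exactly by a two-stage argument: first pin down the honest core $U$ using a size/structural extremal property, then grow $U$ back up to all of $H$ using insularity of $H$ and the hypothesis $H \subseteq \sigma(U)$. The key facts to exploit are that $H$ is self-consistent and insular, that $|H| > |C|$, and that $U$ is a strongly connected component of $G_H$ with $|U| > |C|$.

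\textbf{Step 1: locating $U$.} First I would argue that $U$ is the (unique) largest strongly connected component of $G^\to$ that is self-consistent. Since $U \subseteq H$ and $H$ is self-consistent, $U$ is self-consistent; since $U$ is a strongly connected component of $G_H$ and $H$ is insular, no endorsement can leave $H$, and I would check that $U$ is in fact a strongly connected component of the full endorsement graph $G^\to$ (a larger strongly connected component containing $U$ would have to include a node outside $H$ endorsing into $H$, contradicting nothing directly — so the cleaner route is: among all self-consistent strongly connected subsets of $G^\to$, pick a largest one, call it $W$). The point is that any self-consistent strongly connected set $W$ with $|W| > |C|$ must satisfy $W \cap H \neq \varnothing$ (pigeonhole, since $|W| > |C| \geq |W \cap C|$), and then I want to conclude $W \subseteq H$: a node $h \in W \cap H$ reaches every node of $W$ by endorsement paths, so $W \subseteq \sigma(\{h\}) \subseteq \sigma(H) \subseteq H$ because $H$ is insular (insularity gives $\sigma(H) \subseteq H$). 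Hence every self-consistent strongly connected set of size exceeding $|C|$ lies entirely in $H$; since $U$ itself is such a set, the observer can take $W$ to be any maximal-size self-consistent strongly connected set of size $> |C|$ and be guaranteed $W \subseteq H$.

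\textbf{Step 2: recovering all of $H$ from $U$.} Having found a set $W \subseteq H$ with $|W| > |C|$, I would then form $\sigma(W)$, the endorsement-downstream closure of $W$. Since $W \subseteq H$ and $H$ is insular, $\sigma(W) \subseteq \sigma(H) \subseteq H$. Conversely, the hypothesis $H \subseteq \sigma(U)$ together with $U \subseteq W$ gives $H \subseteq \sigma(U) \subseteq \sigma(W)$. Therefore $\sigma(W) = H$ exactly, so the observer outputs $H = \sigma(W)$ and $C = V \setminus \sigma(W)$. The one wrinkle is that the observer does not know $U$ a priori and might pick a different maximal self-consistent strongly connected set $W \neq U$; but Step 1 guarantees $W \subseteq H$, and $W$ must actually contain $U$ (or at least intersect the same component) — more carefully, $W \subseteq H$ implies $W$ is contained in some strongly connected component of $G_H$, and if $|W| > |C|$ that component is a self-consistent strongly connected set containing $W$, hence equals $W$ by maximality, and one checks this forces the component to be $U$ itself (as $U$ is the component with $H \subseteq \sigma(U)$, and any honest strongly connected component of size $> |C|$ must have this downstream property or we simply note $\sigma(W) \supseteq \sigma(U)' $ is not automatic) — so to be safe I would have the observer enumerate all maximal self-consistent strongly connected sets of size $> |C|$, take downstream closures, and recognize $H$ as the resulting set (all such closures will equal $H$, or one can take the one of size $> |C|$ whose closure is self-consistent and insular).

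\textbf{Main obstacle.} The delicate point is Step 1's claim that a large self-consistent strongly connected set is forced to lie in $H$ rather than straddling $H$ and $C$ — this is where self-consistency does real work: if $W$ met both $H$ and $C$, a node $h \in W \cap H$ reaches a node $c \in W \cap C$ by endorsements (so $h$ endorses toward $c$, which is already impossible since $h \in H$ can only endorse members of $H$). So in fact self-consistency may not even be needed for that inclusion — insularity alone suffices, since $h \in H \cap W$ forces $W \subseteq \sigma(h) \subseteq H$. The role of self-consistency and the size bound is instead to rule out the observer being fooled by a strongly connected self-consistent set built entirely inside $C$: such a set has size at most $|C|$, so insisting $|W| > |C|$ excludes it. I would double-check that no pathological $C$-strategy creates a self-consistent strongly connected set spanning $C$ and part of $H$ of size $> |C|$ that is not downstream-closed — but the argument above shows any such set is a subset of $H$, and then Step 2 closes the gap. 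I expect writing Step 1 cleanly, making precise ``the observer's procedure'' so that it provably returns $H$ and not some other honest subset, to be the part that needs the most care.
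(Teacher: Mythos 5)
Your core two-step argument is exactly the paper's proof: insularity of $H$ forces every endorsement-SCC to lie entirely inside $H$ or entirely inside $C$, so any SCC of size exceeding $|C|$ lies in $H$; then $\sigma$ of such a component stays in $H$ by insularity, while the hypothesis $H\subseteq\sigma(U)$ pins the closure down to all of $H$. That part is sound, and your own observation that self-consistency does no work here is right --- the paper's proof never invokes it, and your Step~1 detour through ``largest self-consistent strongly connected set'' is dead weight.

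The place where your write-up has a genuine problem is the disambiguation at the end. It is not true that every maximal honest SCC of size $>|C|$ has downstream closure equal to $H$: take $H=U\cup W$ with $U,W$ disjoint SCCs of $G_H$, $|W|>|U|>|C|$, and a single endorsement from $U$ into $W$. Then $\sigma(U)=H$ but $\sigma(W)=W\subsetneq H$, and your proposed filter fails because $\sigma(W)$ is \emph{also} self-consistent and insular (every downstream closure of an honest set is). So ``all such closures will equal $H$'' is false, and ``take the one whose closure is self-consistent and insular'' does not select $U$. To be fair, the paper's own one-line procedure (``identify the largest strongly connected component'') stumbles on the same example, since the largest honest SCC need not be the one whose closure covers $H$. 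The clean repair, inside your framework, is to observe that every SCC $W$ with $|W|>|C|$ satisfies $W\subseteq H\subseteq\sigma(U)$, i.e.\ every large SCC is endorsement-downstream of $U$; hence $U$ is the unique large SCC that is upstream of all the others (if two large SCCs were mutually downstream they would coincide), and the observer should output $\sigma$ of that distinguished source component. You correctly sensed that this was ``the part that needs the most care,'' but the specific patches you offered do not close it.
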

\begin{proof} Since $H$ is insular, a strongly connected component is either entirely in $H$ or in $C$. As $|U|>|C|$, then $U\subseteq H$ as is $\sigma(U)$ which, by assumption, is all of $H$. Thus, regardless of the edges of $C$, the observer can exactly determine $H$ by identifying the largest strongly connected component and classifying exactly $\sigma(U)$ as members of $H$.  \end{proof}

Clearly one way the nodes in $H$ can satisfy Theorem \ref{thm:errorFreeStrong} is by producing enough edges to percolate themselves into a strongly connected subgraph. In other words, this situation epitomizes the scenario where members of $H$ avoid being falsely accused by creating a network of secure endorsements. Notice that for large graphs, $H$ can achieve a large strongly connected component with high probability simply by creating random edges so long as each node has an expected degree of at least $(1+\epsilon)\log(|H|)$ for $\epsilon>0$. 
While Theorem \ref{thm:errorFreeStrong} made $H$ determinable using only endorsements, a similar result can be achieved using only accusations, for which we need the following definition. 

\begin{definition}
For sets $S$ and $T$, $T \Dashv S$ if every node in $T$ accuses some node in $S$ and every node in $S$ is accused by at least one node in $T$.
\end{definition}

\begin{corollary}
If for every subset $S\subseteq C$ there exists $U\subseteq H$ such that $U\Dashv S$ and $|U|>|S|$ then the largest self-consistent group is $H$.
\label{cor:outnumber}
\end{corollary}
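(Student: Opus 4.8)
The plan is to show directly that every self-consistent set $Q$ satisfies $|Q|\le |H|$, with equality forcing $Q=H$; since $H$ is self-consistent by assumption, this makes $H$ the unique largest self-consistent group. First I would split an arbitrary self-consistent $Q$ as $Q_H=Q\cap H$ and $Q_C=Q\cap C$, so $|Q|=|Q_H|+|Q_C|$. If $Q_C=\varnothing$ then $Q\subseteq H$ and the bound is immediate (and strict unless $Q=H$), so the only real case is $Q_C\neq\varnothing$.

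In that case I would apply the hypothesis to the nonempty set $S:=Q_C\subseteq C$, obtaining $U\subseteq H$ with $U\Dashv S$ and $|U|>|S|$. The key observation is that $U$ must be disjoint from $Q$: every node of $U$ accuses some node of $S\subseteq Q$, so if some $u\in U$ also lay in $Q$ we would have a pair $u,s\in Q$ with $u\Dashv s$, contradicting self-consistency of $Q$. Hence $U\subseteq H\setminus Q_H$, and counting gives
\[
|H| \;=\; |Q_H| + |H\setminus Q_H| \;\ge\; |Q_H| + |U| \;>\; |Q_H| + |S| \;=\; |Q_H|+|Q_C| \;=\; |Q|,
\]
so $|Q|<|H|$. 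In both cases $|Q|\le|H|$ with equality forcing $Q=H$, which is the claim.

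I do not expect a substantive obstacle. The one point requiring care is the degenerate instance $S=\varnothing$ of the hypothesis (which cannot literally hold, since no nonempty $U$ can accuse into the empty set), but that instance is never invoked: it corresponds precisely to the case $Q\subseteq H$, which needs no hypothesis at all. It is also worth noting that the argument only uses the ``every node of $U$ accuses some node of $S$'' half of the definition of $\Dashv$, so the conclusion in fact holds under the weaker assumption that every nonempty $S\subseteq C$ is accused into by some $U\subseteq H$ with $|U|>|S|$. Structurally this is the accusation-side analogue of Theorem~\ref{thm:errorFreeStrong}: there a strongly connected component too large to fit in $C$ is forced, with its endorsement-downstream closure, into $H$; here a cheater subset too small to out-number a dedicated accusing set in $H$ cannot be hidden inside any self-consistent group.
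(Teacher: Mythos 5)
Your proof is correct and is in essence the paper's own argument: the paper obtains this corollary as the unit-weight special case of Theorem~\ref{thm:outnumberingProperty}, whose proof likewise applies the hypothesis to $S=Q\cap C$, uses self-consistency of $Q$ to get $U\cap Q=\varnothing$, and then exchanges $S$ for $U$ to produce a heavier self-consistent set inside $H$ --- exactly your counting step $|H|\ge|Q_H|+|U|>|Q_H|+|S|=|Q|$, phrased as a contradiction with maximality rather than a direct bound. Your side remarks (only the ``each $u\in U$ accuses into $S$'' half of $\Dashv$ is needed, and the $S=\varnothing$ instance is never invoked) are accurate but do not change the substance.
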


This intuitive result is in fact a special case of Theorem \ref{thm:outnumberingProperty}, which will generalize the statement to graphs with node weights. While it might seem that the hypothesizes of Corollary \ref{cor:outnumber} both require a very large number of accusations and would be difficult to verify there are some simple constructions in $H$ that fulfill the hypothesizes using only $|C|+1$ accusations.  

\begin{corollary}
If there exists a path of the form $h_1 \Dashv c_1 \vdash h_2 \Dashv c_2 \ldots \vdash h_k $, where $c_i\in C$, $h_i\in H$, the $c_i$ and $h_i$ are unique and $k = |C|+1$, then the largest self-consistent group is $H$.
\label{thm:HamiltonianPath}
\end{corollary}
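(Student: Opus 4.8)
The plan is to obtain Corollary~\ref{thm:HamiltonianPath} as an immediate consequence of Corollary~\ref{cor:outnumber}, by producing, for each nonempty $S \subseteq C$, a set $U \subseteq H$ with $U \Dashv S$ and $|U| > |S|$. First I would unpack the hypothesis: the path supplies distinct honest nodes $h_1,\dots,h_k$ and distinct cheaters $c_1,\dots,c_{k-1}$ with $h_i \Dashv c_i$ and $h_{i+1} \Dashv c_i$ for every $i = 1,\dots,k-1$ --- that is, each cheater appearing on the path is accused by both of its honest neighbours. Since $k-1 = |C|$ and the $c_i$ are distinct, $\{c_1,\dots,c_{k-1}\} = C$, so every subset of $C$ can be written as $S = \{c_i : i \in I\}$ for some $I \subseteq \{1,\dots,k-1\}$.

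Given such an $S$ with $I \neq \varnothing$, I would take $U = \{\,h_j : j \in I \cup (I+1)\,\}$, where $I+1 = \{i+1 : i \in I\}$; note $U \subseteq H$ since each $h_j \in H$. Verifying $U \Dashv S$ is then just checking the two halves of the definition: every element of $U$ is an $h_i$ or $h_{i+1}$ with $i \in I$, and both $h_i$ and $h_{i+1}$ accuse $c_i \in S$; conversely each $c_i \in S$ is accused by $h_i \in U$. So the only thing remaining before Corollary~\ref{cor:outnumber} applies is the size comparison.

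The sole step that needs a genuine (if brief) argument is $|U| > |S|$. Since the $h_j$ on the path are distinct, $|U| = |I \cup (I+1)|$ while $|S| = |I|$, so it suffices to show $|I \cup (I+1)| \geq |I| + 1$ for nonempty $I$: indeed $I \cup (I+1) \supseteq I$, and with $m = \max I$ the element $m+1$ lies in $(I+1) \setminus I$, so the inclusion is strict. Hence $|U| \geq |I| + 1 > |S|$, and Corollary~\ref{cor:outnumber} gives that $H$ --- which is self-consistent by assumption --- is the largest self-consistent group. The degenerate case $S = \varnothing$ needs no attention, since a self-consistent set containing no cheater already lies in $H$. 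I expect this little counting observation --- that shifting an index set by one strictly enlarges its union with the original, which is exactly where the ``consecutive overlap'' structure of the path is used --- to be the crux; everything else is bookkeeping against the definitions of $\Dashv$ and self-consistency, so I do not anticipate a real obstacle beyond choosing the right $U$.
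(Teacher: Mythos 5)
Your proof is correct and takes essentially the same route as the paper: both reduce to Corollary~\ref{cor:outnumber} by choosing $U$ to be the honest path-neighbours of the cheaters in $S$. The only (cosmetic) difference is in verifying $|U|>|S|$ --- the paper decomposes $S$ into maximal runs of consecutive indices and gains one honest node per run, whereas you observe directly that $\max(I)+1$ lies in $(I+1)\setminus I$; your version is arguably the cleaner bookkeeping.
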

\begin{proof} Notice that any collection of cheaters can be expressed as the union of sets of consecutively numbered cheaters between some $c_a$ and $c_b$. Thus there exists $h_{a}\Dashv c_a, h_i \Dashv c_i, h_{b}\Dashv c_b$ and $h_{b+1}\Dashv c_b$, for $a<i\le b$ and therefore for any collection of cheaters we can immediately find a greater collection of honest players accusing them. Thus by Corollary \ref{cor:outnumber}, $H$ is the largest self-consistent group. \end{proof} 

Notice that the hypotheses of Corollary \ref{thm:HamiltonianPath} essentially amount to the existence of a Hamiltonian bipartite undirected subgraph of $G^{\Dashv}$, restricted to the accusations from $H$ into $C$. Consequently, if the honest players can spread their accusations broadly enough across $C$ they can ensure that the largest self-consistent group is exactly $H$. We now show the more general Theorem \ref{thm:outnumberingProperty} for graphs with node weights, where each node  $u\in V$ has weight $w_u\ge 0$.

\begin{theorem}
If for every $S \subset C$ there exists $U \subset H$ such that $U \Dashv S$ and $W_U = \sum_{u\in U} w_u$ is greater than $W_S = \sum_{v \in S} w_v$, then the most heavily weighted self-consistent group of nodes is $H$. 

\label{thm:outnumberingProperty}
\end{theorem}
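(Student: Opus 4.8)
The plan is to generalize the counting argument implicit in Corollary~\ref{cor:outnumber} to the weighted setting, showing that $H$ is the unique maximum-weight self-consistent set. First I would observe that $H$ itself is self-consistent: honest nodes only accuse cheaters, so there is no accusation internal to $H$. The substance of the argument is to show that every other self-consistent set has strictly smaller total weight than $W_H$.

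So let $Q$ be any self-consistent set, and I would write $Q = Q_H \cup Q_C$ where $Q_H = Q \cap H$ and $Q_C = Q \cap C$. The key structural observation is this: let $S = C \setminus Q_C$ be the cheaters \emph{excluded} from $Q$. If $S = \varnothing$ then $Q$ contains all of $C$; but then, since $H$ is nonempty and $|H| > |C| \ge |Q_C|$, I can find... actually the cleaner route is to handle $S = \varnothing$ as a degenerate case separately (or note that we may assume $S \neq \varnothing$, since if $Q \supseteq C$ then by applying the hypothesis to, say, a single cheater we already get a contradiction with $Q$ being self-consistent once we also know $Q$ would have to exclude the honest accusers). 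Assuming $S \neq \varnothing$, apply the hypothesis to $S$: there is $U \subseteq H$ with $U \Dashv S$ and $W_U > W_S$. Now I claim $U \cap Q = \varnothing$: every $u \in U$ accuses some node in $S \subseteq C \setminus Q_C$, but every node of $S$ could still lie outside $Q$ --- wait, that is exactly the point, $S$ is outside $Q$ by construction, so accusing a node of $S$ does \emph{not} violate self-consistency of $Q$. Let me instead use the dual: every node of $S$ is accused by \emph{some} node of $U \subseteq H \subseteq$ potential members of $Q$. That still does not immediately force $U \cap Q = \varnothing$.

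The correct pairing is the following, and this is the step I expect to be the main obstacle to state cleanly. Take $Q$ self-consistent and set $S = C \setminus Q_C$ as the excluded cheaters; we want to compare $Q$ against $H$. Decompose $H = Q_H \cup (H \setminus Q_H)$, so $W_H = W_{Q_H} + W_{H \setminus Q_H}$, and $W_Q = W_{Q_H} + W_{Q_C}$. Thus it suffices to prove $W_{Q_C} < W_{H \setminus Q_H}$ whenever $Q \neq H$, i.e.\ the weight of cheaters admitted into $Q$ is less than the weight of honest nodes kept out of $Q$. Apply the hypothesis to the set $S' := Q_C \subseteq C$ (assuming $Q_C \neq \varnothing$; if $Q_C = \varnothing$ then $Q \subseteq H$ and the inequality $W_Q \le W_H$ is immediate, with equality only if $Q = H$). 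We obtain $U \subseteq H$ with $U \Dashv Q_C$ and $W_U > W_{Q_C}$. Since every $u \in U$ accuses some node of $Q_C \subseteq Q$, and $Q$ is self-consistent, no such $u$ can lie in $Q$; hence $U \subseteq H \setminus Q_H$, giving $W_{H \setminus Q_H} \ge W_U > W_{Q_C}$, exactly what we needed. Therefore $W_Q < W_H$ for every self-consistent $Q \neq H$, so $H$ is the unique heaviest self-consistent set.

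Two bookkeeping points I would nail down in the writeup: first, the hypothesis is stated for $S \subset C$ (proper subset), so if one wants to also rule out $Q_C = C$ one notes that $Q_C = C$ forces, by the above paragraph applied to any proper nonempty $S \subset C$ together with self-consistency, that $Q$ omits a positively-weighted chunk of $H$ --- or more simply, one assumes all weights are positive, or handles zero-weight nodes by the remark that they never affect the strict inequality. Second, the phrase ``most heavily weighted self-consistent group'' should be read as asserting $H$ is a maximizer (and in fact the unique maximizer among sets not differing from $H$ only by zero-weight nodes); the proof above delivers exactly this. The only real idea is the pairing ``cheaters in $Q$'' $\leftrightarrow$ ``their honest accusers, who are forced out of $Q$,'' and the weighted outnumbering hypothesis is precisely what makes that pairing weight-decreasing.
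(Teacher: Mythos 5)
Your argument is correct and is essentially the paper's proof: both apply the hypothesis to $S=Q\cap C$ and use self-consistency of $Q$ to force the accusing set $U$ to be disjoint from $Q$, which is the whole content of the theorem. The only cosmetic difference is that the paper swaps $S$ for $U$ to exhibit a strictly heavier self-consistent set $\hat{Q}=U\cup(Q\setminus S)$ and derives a contradiction with maximality, whereas you conclude directly that $U\subseteq H\setminus Q_H$ and hence $W_Q<W_H$; your bookkeeping remarks about proper subsets and zero-weight nodes are reasonable but not addressed in the paper either.
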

\begin{proof}Let $Q$ be the most heavily weighted self-consistent group of nodes. Assume to the contrary that $Q\cap C$ is nonempty and let $S = Q \cap C$. Since $S \subset C$, then by our assumption there exists $U \subset H$ such that $U \Dashv S$ and $W_U>W_S$. Since $Q$ is consistent, then $U\cap Q = \varnothing$ and thus $\hat{Q} = U \cup (Q\setminus S)\in H$. Additionally, $\hat{Q}$ is self-consistent and has weight $W_Q+W_U-W_S>W_Q$, contradicting the assumption that $Q$ is the largest self-consistent set of nodes. Thus $Q\subseteq H$, and since $W_H\ge W_Q$, then $Q=H$. 
\end{proof}

The benefit of proving Theorem \ref{thm:outnumberingProperty} for graphs with node weights is that it can be applied to the condensation of $G^\to$, where each node represents a strongly connected component of $G^\to$ and its weight is the component's size.  For instance, if $H$ is split between two strongly connected components, $A$ and $B$, then it may not be identifiable from Theorem \ref{thm:errorFreeStrong}, but if a member of $A$ and a member of $B$ are each able to accuse all of $C$, then $H$ can be determined according to Theorem \ref{thm:outnumberingProperty}. 

We can further generalize the approach of Theorem \ref{thm:outnumberingProperty} by considering more fully how endorsements and accusations can interact, and how endorsements internal to $C$ can augment the accusations of $H$.

\begin{theorem}
 Let $Q$ be the largest self-consistent, insular set in our network. If for any set of cheaters $S$ there exists a set $K$ such that
\begin{enumerate}
    \item $S \subset \rho(K)\cup K$,
    \item $S \cap \rho(k) \neq \varnothing \quad \forall k \in K$, and
    \item there exists $U \subset H$ such that $U \Dashv K$ and $|\rho_H(U)| > |\rho(K)|$,
\end{enumerate}
then $Q = H$.
\label{thm:tree}
\end{theorem}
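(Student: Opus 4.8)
The plan is to argue by contradiction, paralleling the proof of Theorem~\ref{thm:outnumberingProperty}. Suppose $S := Q\cap C \neq \varnothing$ and apply the hypothesis to this particular $S$ to obtain sets $K$ and $U$ as in the statement; the goal is then to exhibit a self-consistent, insular set strictly larger than $Q$, contradicting the choice of $Q$.

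First I would extract two structural facts. Since $U\subseteq H$ and $U\Dashv K$, every node of $K$ is accused by a member of $H$, so the honesty constraint on $H$ forces $K\subseteq C$. Next, by condition~2, each $k\in K$ has a cheater $s\in S\cap\rho(k)$, i.e. $s$ lies endorsement-upstream of $k$; since $s\in Q$ and $Q$ is insular, the remark that ``$u\in Q$, $u\in\rho(v)$ $\Rightarrow v\in Q$'' gives $k\in Q$. Hence $K\subseteq Q\cap C=S$. Because every node of $K\subseteq Q$ is accused by some node of $U$ and $Q$ is self-consistent, $U\cap Q=\varnothing$; and since $\rho_H(U)\subseteq\rho(U)$, any $q\in\rho_H(U)\cap Q$ would lie upstream of some $u\in U$, forcing $u\in Q$ by insularity --- a contradiction. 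So $\rho_H(U)\cap Q=\varnothing$.

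Then I would take the competitor $\hat{Q}:=(Q\cap H)\cup\rho_H(U)\cup\sigma(\rho_H(U))$. Each piece lies in $H$ (honest endorsement paths never leave $H$, so $\rho_H(U)\subseteq H$ and $\sigma(\rho_H(U))\subseteq H$), hence $\hat{Q}\subseteq H$ is self-consistent. It is also insular: $Q\cap H$ is insular because an honest node of $Q$ endorses only honest nodes, all of which stay in $Q$; $\rho_H(U)\cup\sigma(\rho_H(U))$ is the endorsement-downstream closure of $\rho_H(U)$ and hence $\sigma$-closed; and a union of $\sigma$-closed sets is $\sigma$-closed. For the count, $Q\cap H$ and $\rho_H(U)$ are disjoint by the previous paragraph, so $|\hat{Q}|\ge(|Q|-|S|)+|\rho_H(U)|$. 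Condition~3 gives $|\rho_H(U)|>|\rho(K)|$, while condition~1 together with $K\subseteq S$ gives $S\subseteq\rho(K)$ (reading $\rho(\cdot)$ as containing its argument), so $|\rho(K)|\ge|S|$. Thus $|\hat{Q}|>|Q|$, contradicting the maximality of $Q$; therefore $S=\varnothing$, i.e. $Q\subseteq H$, and since $H$ is itself self-consistent and insular while $Q$ is the largest such set, $Q=H$.

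The main obstacle is the second paragraph: the argument relies on using the insularity of $Q$ in two opposing ways --- to pull $K$ (and the cheaters feeding into it) into $Q$, and simultaneously to keep $\rho_H(U)$ out of $Q$ --- and on the observation that $Q\cap H$ and the downstream closure of $\rho_H(U)$ are separately insular, so their union is a genuine self-consistent insular competitor. A secondary point is the final inequality: condition~2 is exactly what forces $K\subseteq Q$ (without it an accuser in $U$ could hit only nodes of $K$ lying outside $Q$, and self-consistency of $Q$ would not be violated), and condition~1 is what converts the size bound of condition~3 into $|\rho_H(U)|>|S|$.
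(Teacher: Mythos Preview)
Your argument is correct and follows the same contradiction scheme as the paper: set $S=Q\cap C$, use condition~2 plus insularity to get $K\subseteq Q$, use $U\Dashv K$ plus self-consistency and insularity to get $\rho_H(U)\cap Q=\varnothing$, and then compare sizes via conditions~1 and~3. The only substantive difference is that you work harder than necessary in building the competitor: the paper simply observes that $(Q\setminus S)\cup\rho_H(U)\subseteq H$ already has more than $|Q|$ elements, so $|H|>|Q|$, and then uses $H$ itself---which is automatically self-consistent and insular---as the larger set, avoiding your addition of $\sigma(\rho_H(U))$ and the verification of insularity for $\hat Q$. Your explicit flag that one must read $K\subseteq\rho(K)$ to pass from $S\subseteq\rho(K)\cup K$ to $|S|\le|\rho(K)|$ is a fair point; the paper uses the same convention tacitly. (Minor note: the clause ``together with $K\subseteq S$'' in your last paragraph is not actually used---only the convention on $\rho$ matters there.)
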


\begin{proof} Since $Q$ is the largest self-consistent, insular set, if $Q \neq H$, then $Q_C = Q\cap C \neq \varnothing$. By assumption, there exists $K$ such that $Q_C \subseteq \rho(K)\cup K$. Since $Q_C \cap \rho(k) \neq \varnothing$ for all $k \in K$, then $K \subseteq Q$. By assumption there also exists $U \subset H$ such that $U \Dashv K$. Self consistency of $Q$ implies that $U \cap Q = \varnothing $, and insularity of $Q$ then implies that $\rho_H(U)\cap Q = \varnothing$. Thus, $\rho_H(U) \cup (Q\setminus Q_C) \subseteq H$, and $|Q|< |\rho_H(U) \cup (Q\setminus Q_C)| \le |H|$, leading to the contradiction that $H$ is a strictly larger self-consistent insular set than $Q$.   \end{proof}

The approach taken in the proof of Theorem~\ref{thm:tree} is to ensure that the inclusion of cheaters is always non-optimal in the construction of the largest possible insular set. That is, $Q$ cannot be the largest insular set if it includes any subset of $\rho(V)$ for $V \subset C$. Conceptually, one way to think of the result in Theorem~\ref{thm:tree} is that strength of the accusations of $U$ is weighted by the number of nodes with endorsements leading to $U$. For example, a strongly connected group where each node makes a few accusations is, at least according to the hypothesizes of Theorem~\ref{thm:tree}, functionally the same as if that group had a single member who made the same accusations.  

However, while Theorem~\ref{thm:tree} and Theorem~\ref{thm:outnumberingProperty} give a clear signal for an external observer to search for, finding that signal can be NP-hard.  Namely, Theorem~\ref{thm:outnumberingProperty} implies that the honest nodes may be identified by finding the largest self-consistent subset of nodes, but this is clearly NP-hard as it is equivalent to finding the largest size clique in the complement of $G^{\Dashv}$. Similarly, in a graph without endorsements, the largest self-consistent insular set is exactly the largest self-consistent set, and thus finding the object that Theorem~\ref{thm:tree} suggests is also NP-hard. Thus, while these theorems prove that there are plenty of situations where the graph of accusations and endorsements allow for perfect identification of the honest nodes regardless of the behavior of the cheaters, they lack clear direction as to how an external observer with limited resources would actually investigate a medium or large sized network.

\section{Inconsistent Motifs}
While Theorem~\ref{thm:tree} shows that under the right circumstances it is possible for an external observer to find the honest nodes, it does not suggest an efficient algorithm. A natural next step is to consider heuristics that an observer could use to evaluate the credibility of specific agents. Since a node in $H$ neither accuses other nodes in $H$ nor endorses nodes in $C$, there are several edge motifs honest nodes will not be involved in. As mentioned in the previous section, if $u\in H$ then $\sigma(u)\subseteq H$ and thus $u$ will neither accuse nor be accused by any node $v\in \sigma(u)$.  Similarly, for $u\in H$ no two nodes $v,w\in\sigma(u)$ will accuse each other.
 
Of these inconsistent motifs, the more important and most easily detected are those on edges and triangles. For example, if $u\to v$ and $v \Dashv u$, then $u\not \in H$.  Similarly, on three nodes there are three inconsistent edge motifs, as shown in Figure \ref{fig:susp_motifs}. For Type I arrangements, consistency implies that the highlighted node should agree with the judgment of its endorsed node, yet it directly conflicts with that judgment. Similarly, in Type II arrangements, the highlighted node should agree with the accusation of its endorsed nodes, but it does not. Finally, in a Type 3 arrangement the highlighted node causes itself to look suspicious by lending credibility to a group of nodes which ultimately accuse it. 

Since any node implicated by an inconsistent motif is not in $H$, an observer can conduct a preliminary screening of the network, using inconsistent motifs to classify any implicated node $u$ and it's upstream endorsers, $\rho(u)$ to $C$. As described later in Sections \ref{sec:local_iad} and \ref{sec:strong_iad}, this preliminary pass necessarily results in a network without inconsistent motifs and such networks have structures that simplify, without necessarily speeding up, finding the largest self-consistent and insular sets.

\begin{figure}
    \centering
    \includegraphics[width=\textwidth]{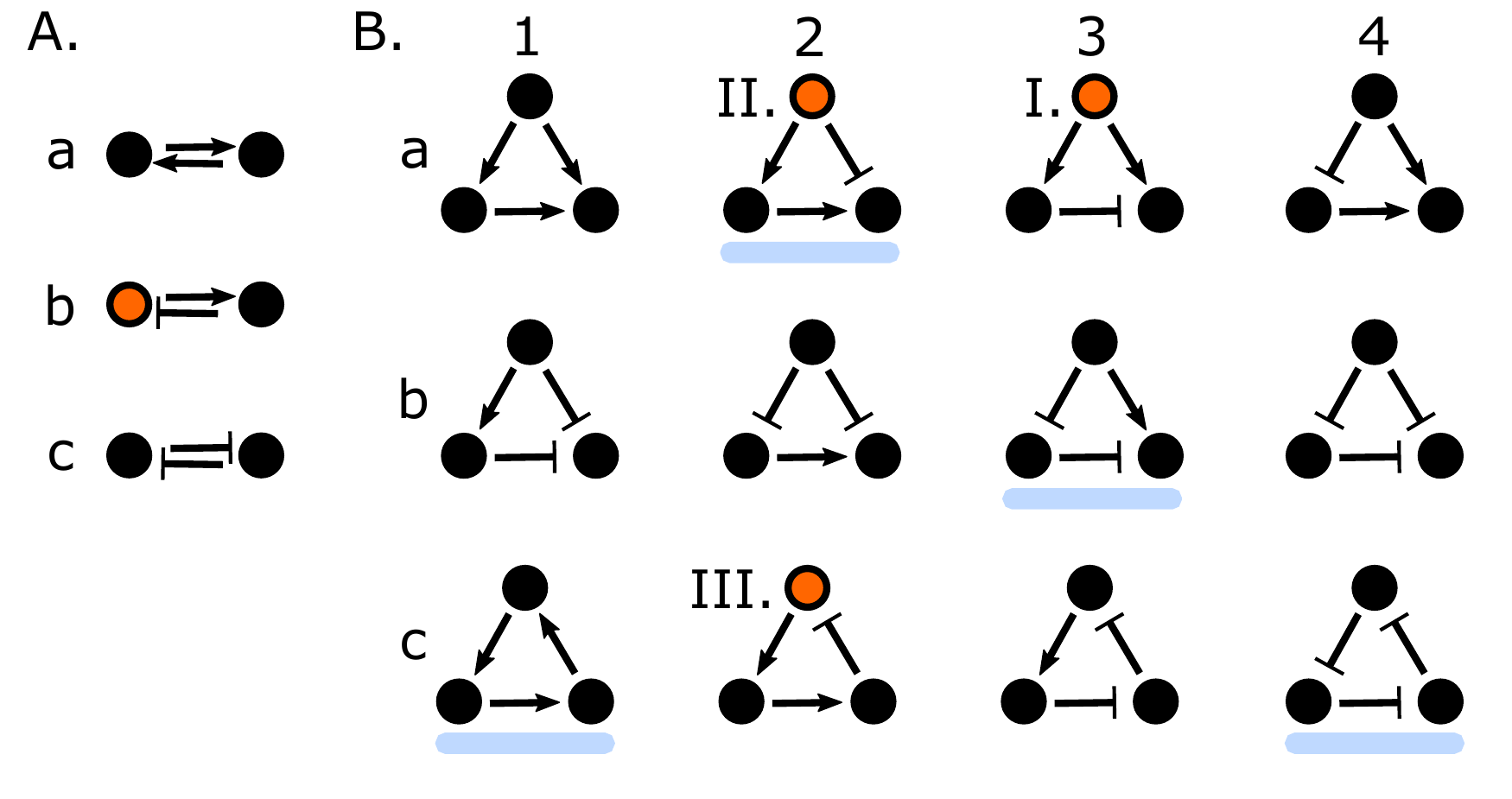}
    \caption{Ignoring symmetries, the possible edges on two nodes (A) and three nodes (B) with nodes implicated by an inconsistency highlighted in orange. Inconsistent triangles are different from those underlined in blue that disobey a hierarchical relationship.} 
    \label{fig:susp_motifs}
\end{figure}

In addition to being useful to an observer in a first pass, inconsistent motifs should also be rare in settings where nodes want to appear trustworthy, such as the online trading of Bitcoin, a decentralized cryptocurrency. The exchanges which oversee this trading often allow users to rate the trustworthiness of other users \cite{kumar2016edge,kumar2018rev2} due to the inherent risk of trading Bitcoin induced by the presence of fraud \cite{moore2013beware}.  Because users are particularly interested in appearing honest in order to gain counterparties to future trades, these exchanges are an ideal practical example of networks where nodes might want to appear consistent to an external observer. Two examples of such networks are the Bitcoin OTC and Bitcoin Alpha networks \cite{kumar2016edge,kumar2018rev2} which describe two different Bitcoin exchanges as signed and weighted networks. Table \ref{tab:Tri_OTC} shows a variety of statistics for the occurrence of the different triangles in Figure \ref{fig:susp_motifs} in each network.

Indeed, the suspicious motifs in the Bitcoin networks are underrepresented compared to an Erd\H{o}s-R\'{e}nyi null model, while the motifs one would expect to arise out of altering them, as described in the next section and displayed in Figure~\ref{fig:motif_changes}, are over-represented\footnote{As in many empirical graphs, all triangles are over-represented compared to an Erd\H{o}s-R\'{e}nyi null model, but this this null model controls for each motif's relative over-representation.}. Further, while consistency makes sense as a driving force in these Bitcoin networks, the hypothesized forces of structural balance and status would seem to be less important, and this is evidenced by the data. One notable difference between the observed triangle frequencies and the predictions of structural balance theory is that structural balance theory predicts that triangles with three accusations (4b \& 4c) should be rare, but these triangles are, relative to the null, highly over-represented in both Bitcoin OTC and Alpha. Similarly, status theory, which predicts that edges a and c and triangles 1c, 2a, 3b and 4c should be rare, is inconsistent with the prevalence of edges a and c as well as that of triangles 4c and 3b.  

\begin{table}[]
\centering
\begin{tabular}{cc|cccccc}
   & Edge $e_i$ & $ |e_i|$    & $\mathbb{E}|e_i|$ & $ \sum W(e_i)$ & $\mathbb{E} \sum W(e_i)$ & $\frac{1}{M}\frac{|T_i|}{\mathbb{E} |e_i|}$ & $ \frac{1}{M}\frac{\sum W(T_i) }{ \mathbb{E} \sum W(T_i)}$ \\
  \parbox[t]{2mm}{\multirow{3}{*}{\rotatebox[origin=c]{90}{ OTC}}}                              & a          & 13438        & 14.833            & 22019          & 17.950                   & 1.00                                        & 1.00                                                       \\
   & \textbf{b} & \textbf{358} & 3.300             & \textbf{743}   & 3.992                    & \textbf{0.12}                               & \textbf{0.15}                                              \\ 
   & {\ul c}    & {\ul 304}    & 0.184             & {\ul 2224}     & 0.222                    & {\ul 1.83}                                  & {\ul 8.16}                                             \\    [0.5ex] 
\hline\hline   
       \parbox[t]{2mm}{\multirow{3}{*}{\rotatebox[origin=c]{90}{ Alpha}}} & a          & 9678         & 17.929            & 16325          & 22.046                   & 1.00                                        & 1.00                                                   \rule{0pt}{3ex}    \\
       & \textbf{b} & \textbf{248} & 2.432             & \textbf{495}   & 2.989                    & \textbf{0.19}                               & \textbf{0.22}                                              \\
       & {\ul c}    & {\ul 136}    & 0.082             & {\ul 859}      & 0.101                    & {\ul 3.06}                                  & {\ul 11.44}                                               
\end{tabular}
\vspace{5mm}

\begin{tabular}{cc|cccccc}
                   & Triad $T_i$ & $ |T_i|$     & $\mathbb{E}|T_i|$ & $ \sum W(T_i)$ & $\mathbb{E} \sum W(T_i)$ & $\frac{1}{M} \frac{|T_i|}{\mathbb{E} |T_i|}$ & $\frac{1}{M} \frac{\sum W(T_i)}{ \mathbb{E} \sum W(T_i)}$ \\
\parbox[t]{2mm}{\multirow{12}{*}{\rotatebox[origin=c]{90}{Bitcoin OTC}}} & {\ul 1a}    & {\ul 103194}  & 161.539           & {\ul 133449}   & 171.820                  & {\ul 0.99}                                   & {\ul 0.88}                                                \\
& \textbf{2a} & \textbf{3752} & 17.970            & \textbf{5175}  & 20.999                   & \textbf{0.32}                                & \textbf{0.28}                                             \\
& \textbf{3a} & \textbf{3883} & 17.970            & \textbf{5316}  & 20.999                   & \textbf{0.33}                                & \textbf{0.29}                                             \\
& 4a          & 4670          & 17.970            & 6850           & 20.999                   & 0.40                                         & 0.37                                                      \\
& {\ul 1b}    & {\ul 4391}    & 1.999             & {\ul 11020}    & 3.173                    & {\ul 3.40}                                   & {\ul 3.92}                                                \\
& {\ul 2b}    & {\ul 3921}    & 1.999             & {\ul 12821}    & 3.173                    & {\ul 3.04}                                   & {\ul 4.56}                                                \\
& {\ul 3b}    & {\ul 1478}    & 1.999             & {\ul 5150}     & 3.173                    & {\ul 1.14}                                   & {\ul 1.83}                                                \\
& 4b          & 597           & 0.222             & 3189           & 1.007                    & 4.16                                         & 3.57                                                      \\
& 1c          & 32651         & 53.846            & 41304          & 57.273                   & 0.94                                         & 0.81                                                      \\
& \textbf{2c} & \textbf{4563} & 17.970            & \textbf{6542}  & 20.999                   & \textbf{0.39}                                & \textbf{0.35}                                             \\
& {\ul 3c}    & {\ul 1305}          & 1.999             & {\ul 3990}           & 3.173                    & {\ul 1.01}                                   & {\ul 1.42}                                                \\
& 4c          & 62            & 0.074             & 334            & 0.336                    & 1.30                                         & 1.12                                                       \\    [0.5ex] 
 \hline\hline  
\parbox[t]{2mm}{\multirow{12}{*}{\rotatebox[origin=c]{90}{Bitcoin Alpha}}} & {\ul 1a}    & {\ul 74632}   & 214.633           & {\ul 100005}   & 230.303                  & {\ul 0.77}                                   & {\ul 0.47}                                            \rule{0pt}{2.5ex}    \\
& \textbf{2a} & \textbf{2921} & 14.555            & \textbf{3857}  & 16.889                   & \textbf{0.44}                                & \textbf{0.25}                                             \\
& \textbf{3a} & \textbf{3035} & 14.555            & \textbf{4095}  & 16.889                   & \textbf{0.46}                                & \textbf{0.26}                                             \\
& 4a          & 3872          & 14.555            & 5534           & 16.889                   & 0.59                                         & 0.36                                                      \\
& {\ul 1b}    & {\ul 2492}          & 0.987             & {\ul 5730}     & 1.423                    & {\ul 5.58}                                   & {\ul 4.37}                                                \\
& {\ul 2b}    & {\ul 951}     & 0.987             & {\ul 4383}     & 1.423                    & {\ul 2.13}                                   & {\ul 3.34}                                                \\
& {\ul 3b}    & {\ul 550}     & 0.987             & {\ul 2016}     & 1.423                    & {\ul 1.23}                                   & {\ul 1.54}                                                \\
& 4b          & 300           & 0.067             & 1405           & 0.200                    & 9.91                                         & 7.63                                                      \\
& 1c          & 23717         & 71.544            & 30933          & 76.768                   & 0.73                                         & 0.44                                                      \\
& \textbf{2c} & \textbf{3806} & 14.555            & \textbf{5355}  & 16.889                   & \textbf{0.58}                                & \textbf{0.34}                                             \\
& {\ul 3c}    & {\ul 597}     & 0.987             & {\ul 2004}     & 1.423                    & {\ul 1.34}                                   & {\ul 1.53}                                                \\
& 4c          & 31            & 0.022             & 168            & 0.067                    & 3.07                                         & 2.74                                                     
\end{tabular}    \caption{The number of edges (top) and triangles (bottom) observed in the Bitcoin OTC and Bitcoin Alpha networks using motif names from Figure \ref{fig:susp_motifs}. $|e_i|$ and $|T_i|$ are counts of the motifs, and $W(e_i)$ and $W(T_i)$ incorporate the weights in the graphs where the weight of the motif is the minimum weight of the edges contained in the motif.
$\mathbb{E}|e_i|$, $\mathbb{E}|T_i|$, $\mathbb{E}W(e_i)$ and $\mathbb{E}W(T_i)$ are the expected number and weights of these motifs under an Erd\H{o}s-R\'{e}nyi null model. 
$\frac{|x|}{\mathbb{E} |x|}$ is the ratio of observed motifs to expected motifs and $\frac{1}{M} \frac{|x|}{\mathbb{E} |x|}$ is this quantity scaled about it's median $M$ while $\frac{1}{M} \frac{\sum W(x)}{ \mathbb{E} \sum W(x)}$ is similarly calculated for the weighted version. Motifs expected to be rare based on consistency are in bold while those which are produced by consistency dynamics are underlined.   }
    \label{tab:Tri_OTC}
\end{table}

\section{Implication Avoiding Dynamics}

In a context where nodes are aware that they are being scrutinized by an external observer using inconsistent motifs as a heuristic for identifying suspicious nodes, how might nodes attempt to appear credible? To address this question, we propose a set of node behaviors, collectively called implication avoiding dynamics (IAD), where nodes dynamically change their edges so as to be involved in no inconsistent motifs. 

\subsection{Local Implication Avoiding Dynamics}\label{sec:local_iad}

Consider first the situation where nodes have access to local knowledge about their own and their neighbors' edges. In this setting nodes will be concerned about their involvement in inconsistent edges and triangles and will attempt to alter their outgoing edges in order to avoid participation in these arrangements, as shown in Figure \ref{fig:motif_changes}. 
\begin{figure}
    \centering
    \includegraphics[width=\textwidth]{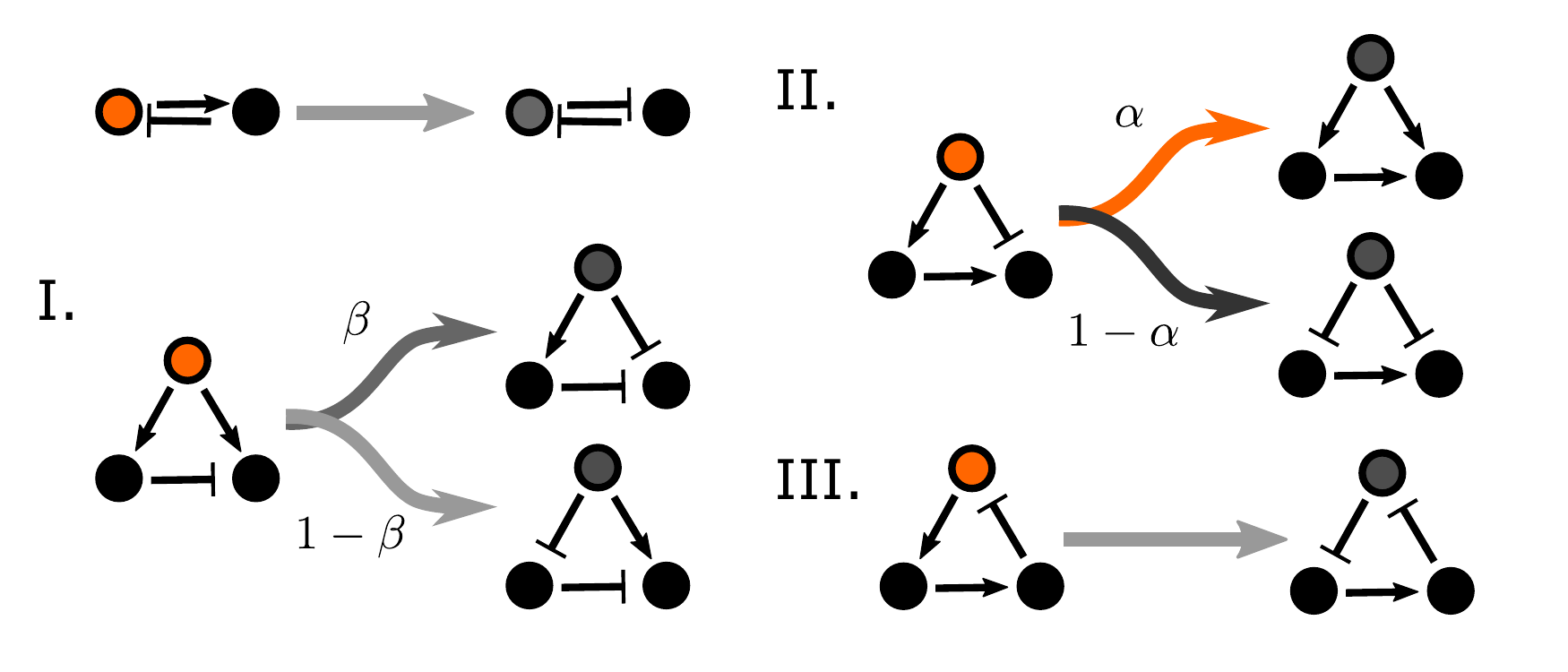}
    \caption{Local implication avoiding dynamics involve nodes implicated in their egonet changing their edges to become consistent. Changes that create accusations are labeled with grey arrows, while the one motif change that creates a new endorsement is orange. Parameters $\alpha$ and $\beta$ determine the likelihood of changes when two possibilities are available.}
    \label{fig:motif_changes}
\end{figure}

This suggests a discrete time dynamical system, where inconsistent edges, being very pronounced, are always instantly resolved, and inconsistent triangles are iteratively resolved. Thus, at each time step a node selected uniformly at random, randomly resolves one of the inconsistent triangles it is implicated by, according to the transitions in Figure \ref{fig:motif_changes}. Notice that for both Type I and Type II triangles the node has a choice of which edge to change, and the probability of these changes are determined by parameters $\alpha$, and $\beta$. In particular, $\alpha$ is the propensity of a player to change their accusation to an endorsement in a Type II motif while $\beta$ is the likelihood of siding with the accuser rather than the accused in a Type I motif. 

Notice that local implication avoiding dynamics can change endorsements and accusations into each other but never destroys an edge between nodes. Thus, local implication avoiding dynamics changes the nature of the relationships between nodes, but not the overall structure of relationships in the Network. In terms of equilibrium\footnote{Since we are considering IAD as dynamics we will discuss equilibrium states, but these will be the same as absorbing states of the Markov Chain associated with IAD. }, any network without any inconsistent triangles or edges is clearly invariant under these dynamics, so triangle free graphs with no inconsistent edges are automatically at equilibrium. There is also a clear description of the equilibrium state of these dynamics on a complete graph, where either $i\Dashv j$ or $i \to j$ for all node pairs $(i,j)$:
\begin{theorem}
A complete signed directed graph, is in equilibrium under local implication avoiding dynamics if and only if it can be separated into $k$ disjoint, self-consistent, insular communities. 
\label{thm:complete_net_equil_for_local_iad}
\end{theorem}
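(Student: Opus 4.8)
The plan is to reduce the statement to a purely combinatorial fact about signed graphs and then settle both directions by a short case analysis on triangles. The starting point is that, since local IAD acts only on inconsistent edges and inconsistent triangles, a configuration is an equilibrium precisely when it contains neither motif; in particular $\alpha$ and $\beta$ play no role, because the mere presence of such a motif exhibits an available transition. The first reduction is about edges: on a complete signed directed graph, the absence of the inconsistent edge motif (``$u\to v$ and $v\Dashv u$'') is equivalent to sign reciprocity, $u\to v \iff v\to u$ for every pair. Indeed, of the four sign combinations on an ordered pair, the two ``mixed'' ones each realize the inconsistent edge motif (with one orientation or the other implicated), while the two reciprocal ones do not. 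Hence at equilibrium the complete signed directed graph may be treated as an undirected complete signed graph, and under this identification the statement to prove becomes: an undirected complete signed graph has no inconsistent triangle iff its positive edges form a disjoint union of cliques with every remaining edge negative. This last condition is exactly a partition into $k$ self-consistent insular communities, because in a complete graph self-consistency of a part means all of its internal edges are endorsements, and (given reciprocity) insularity means every edge between the part and its complement is an accusation.

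For the forward direction, assume equilibrium. By the reduction the signs are reciprocal, so every triangle carries one of the four unordered patterns $\{+,+,+\}$, $\{+,+,-\}$, $\{+,-,-\}$, $\{-,-,-\}$. Checking against the motif definitions of Figure~\ref{fig:susp_motifs}: the pattern $\{+,+,-\}$ is an inconsistent (Type I) triangle, since the node incident to the two positive edges endorses a node that endorses the third node yet itself accuses that third node; the other three patterns are consistent, because in $\{+,-,-\}$ the only node that endorses another triangle-node has that endorsed node's single accusation matched by its own accusation of the third node, while in $\{+,+,+\}$ and $\{-,-,-\}$ there is no conflicting judgment on the triangle at all. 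Equilibrium therefore forbids $\{+,+,-\}$, which says exactly that the endorsement relation is transitive: if $u\to v$ and $v\to w$ then $u\to w$. Together with reciprocity this makes ``endorses'' an equivalence relation on $V$; its classes $Q_1,\dots,Q_k$ partition $V$, each $Q_i$ has all internal edges positive, and any edge joining distinct classes must be negative (else the classes would merge), so each $Q_i$ is self-consistent and insular.

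For the reverse direction, suppose $V=Q_1\sqcup\cdots\sqcup Q_k$ with each part internally all-positive and all cross-edges negative. Every pair is then reciprocal, so no inconsistent edge occurs; and the three nodes of any triangle are distributed among the parts in one of only three ways — all in one part, giving pattern $\{+,+,+\}$; two in one part and one in another, giving $\{+,-,-\}$; or all in distinct parts, giving $\{-,-,-\}$ — each of which was just shown to be consistent. Equivalently, for any $u\in Q_i$ one has $\sigma(u)=Q_i$, and within $Q_i$ there are no accusations, so $u$ is implicated by no inconsistent edge or triangle. Hence the configuration is an equilibrium.

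I expect the only real obstacle to be bookkeeping rather than ideas: carefully matching the four reciprocal triangle patterns to the Type I/II/III motifs of Figure~\ref{fig:susp_motifs}, and in particular confirming that none of the three ``good'' patterns secretly triggers a Type II or Type III implication under some choice of highlighted node. This is a finite check, but it is the step where an error would be easiest to make, so I would run through it node-by-node for each pattern. One point worth stating explicitly in the write-up is that ``equilibrium'' here refers only to edge and triangle motifs, so the forward argument is entitled to use nothing beyond triangle-consistency — which is precisely what it needs to extract transitivity.
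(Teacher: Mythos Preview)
Your proposal is correct and follows essentially the same route as the paper: first use the absence of the inconsistent edge motif to pass to an undirected signed complete graph, then use the absence of the Type~I triangle to conclude that endorsement is transitive (equivalently, in the paper's phrasing, that any two nodes endorsed by a common node must endorse each other), yielding the clique decomposition; the reverse direction is the same three-case check on how a triangle meets the partition. One small wording slip: in the $\{+,+,-\}$ pattern your sentence identifying the implicated node is garbled (the node incident to the two positive edges does not itself accuse anyone), but the conclusion that this pattern realizes a Type~I inconsistency is correct, and you already flag this as the step to verify carefully.
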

\begin{proof} If the graph is then made up of $k$ disjoint, self-consistent and insular communities, it necessarily does not have any inconsistent triangles. Namely, the only possible triangles in such a graph are comprised of 3 nodes belonging to same community, two nodes belonging to the same community with one belonging to a different community, or three nodes belonging to different communities, none of which are suspicious motifs. Next, consider a graph in equilibrium under implication avoiding dynamics. Clearly all edges are reciprocated in kind, so we may consider our directed graph as a simple graph. If a node endorses two other nodes, stability with regard to triangle I dictates that those nodes can't accuse each other, implying that in a complete graph they endorse each other. Iterating this argument implies that the component connected by endorsements to any node $u$, which in a simple graph is insular by construction, is also self-consistent, and the set of all such disjoint components comprises our desired $k$ communities. \end{proof}

Notice, self-consistent, insular communities were precisely the communities of most interest in Theorem \ref{thm:tree}, and they now appear as a natural consequence of local implication avoiding dynamics in complete networks. Similarly, since a complete graph is at equilibrium under structural balance if and only if it can be split into two disjoint mutually antagonistic communities \cite{cartwright1956structural,antal2005dynamics} then the complete graph equilibrium states of structural balance are a proper subset of those of IAD (when $n\ge 3$).

The equilibrium conditions for sparse graphs are harder to describe explicitly for local implication avoiding dynamics, other than that they obviously lack inconsistent triangles and edges. For example, the corresponding version of Theorem~\ref{thm:complete_net_equil_for_local_iad} for non-complete graphs is only the following:
\begin{corollary}
A graph composed of $k$ disjoint, self-consistent, insular communities is in equilibrium.
\end{corollary}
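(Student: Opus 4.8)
The plan is to show that a graph composed of $k$ disjoint, self-consistent, insular communities contains no inconsistent edges and no inconsistent triangles, from which equilibrium under local implication avoiding dynamics is immediate since such dynamics only act on implicated nodes. Call the communities $V_1, \ldots, V_k$, where each $V_i$ is self-consistent (no $u \Dashv v$ for $u,v \in V_i$) and insular (no $u \to v$ for $u \in V_i$, $v \notin V_i$). The first step is to classify every possible edge. If $u \to v$, then insularity of the community containing $u$ forces $v$ to lie in that same community. Hence every endorsement stays within a community, and every cross-community edge is an accusation. Combined with self-consistency, this means within a community we see only endorsements, and between distinct communities we see only accusations.

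Next I would rule out the inconsistent edge motif: this requires a pair $u \to v$ and $v \Dashv u$. By the classification above, $u \to v$ places $u,v$ in the same community, but then $v \Dashv u$ violates self-consistency of that community, so no such pair exists. For the triangles, I would go through the cases by how the three nodes distribute among communities, mirroring the argument already given for the complete-graph case in Theorem~\ref{thm:complete_net_equil_for_local_iad}. If all three nodes are in one community, every edge among them is an endorsement, so the triangle is the all-endorsement triangle (1a), which is consistent. If two nodes lie in community $V_i$ and the third in $V_j \neq V_i$, the edge between the two co-members (if present) is an endorsement and both edges to the outside node (if present) are accusations; checking against Figure~\ref{fig:susp_motifs}, none of Type I, II, or III arises, since an inconsistent triangle requires a node endorsing another node whose judgment it then contradicts, and here the only endorsement is between the two nodes who agree (both accuse the outsider, or are silent). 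If all three lie in distinct communities, every edge is an accusation, again a consistent configuration. Therefore no inconsistent triangle is present.

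Since local implication avoiding dynamics only change the outgoing edges of a node that is currently implicated by an inconsistent edge or triangle, and we have shown no node is so implicated, the dynamics make no change: the graph is invariant, i.e.\ at equilibrium. I do not expect a serious obstacle here; the one point requiring care is being exhaustive in the triangle case analysis — in particular remembering that in a sparse (non-complete) graph some of the three possible edges may be absent, so one must confirm that every \emph{sub}configuration of each case (with one or more edges missing) is still among the consistent motifs, which it is because deleting edges from a consistent triangle cannot create any of the Type I/II/III patterns. This mirrors the reasoning in Theorem~\ref{thm:complete_net_equil_for_local_iad} but without the luxury of being able to fill in missing edges, so the statement is genuinely weaker — it is a one-directional implication only, which is exactly why it is phrased as a corollary rather than a characterization.
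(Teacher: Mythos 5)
Your proof is correct and follows essentially the same route the paper intends: the corollary is left unproved in the paper precisely because it is the forward direction of the proof of Theorem~\ref{thm:complete_net_equil_for_local_iad} (the case analysis on how a triangle's three nodes distribute among communities), which never uses completeness. Your added care about absent edges in the sparse case is a sensible elaboration but does not change the argument.
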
 \label{cor:convergence_of_liad}
This implies that a graph split into two self-consistent communities that mutually accuse each other is in equilibrium.

While equilibrium structures for non-complete networks can be complicated, in most settings the long term behavior of local implication avoiding dynamics will reach equilibrium.

\begin{theorem}
 If $\alpha\in[0,1)$, then a graph governed by local implication avoiding dynamics eventually converges to an equilibrium state.
\end{theorem}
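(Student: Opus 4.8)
The plan is to exhibit, from every configuration, a finite sequence of steps---each of positive probability---that ends at an equilibrium, and then to invoke the standard fact that a finite Markov chain from which some absorbing state is reachable with positive probability from every state is absorbed almost surely. The quantity driving this is $N_E(G)$, the number of endorsement edges in $G$. Since local implication avoiding dynamics only ever flips the sign of an edge and never adds or deletes one, $N_E$ is a nonnegative integer bounded above by the total number of edges, which the dynamics conserve; moreover every inconsistent edge or triangle of Figure~\ref{fig:susp_motifs} contains at least one endorsement, so any configuration with $N_E = 0$ is automatically an equilibrium.

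The heart of the argument, read off Figure~\ref{fig:motif_changes}, is that whenever $G$ is not an equilibrium there is a step of positive probability that strictly decreases $N_E$. Choose an implicated node $u$ (one exists, since otherwise $G$ has no inconsistent motif and is already an equilibrium); with probability $1/|V| > 0$ it is the node selected on that step. Each action such a node can perform flips a single edge's sign, and the only move in Figure~\ref{fig:motif_changes} that turns an accusation into an endorsement is the lone ``orange'' Type II move, taken with probability $\alpha$; every other move---resolving an inconsistent edge, either resolution of a Type I or Type III triangle, and the complementary Type II resolution taken with probability $1-\alpha$---replaces one of $u$'s outgoing endorsements by an accusation. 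Since $u$ is implicated it performs at least one action, and with positive probability it performs only $N_E$-decreasing ones; this is exactly where $\alpha < 1$ is used, so that the decreasing Type II option is genuinely available. Hence the step strictly decreases $N_E$.

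Iterating, from an arbitrary $G$ one assembles a path $G = G_0 \to G_1 \to \cdots$ of positive-probability steps along which $N_E$ strictly decreases until an equilibrium is reached; as $N_E$ can fall at most $N_E(G)$ times before reaching $0$ (which already forces equilibrium), this path has length at most $N_E(G)$ and terminates at an equilibrium. Thus an absorbing state is reachable with positive probability from every configuration. The state space is finite---it is the set of sign assignments to the conserved edge set---so every recurrent class of the chain must be a single equilibrium, and since a finite chain enters a recurrent class with probability one, the dynamics converge to an equilibrium almost surely.

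The only step needing genuine care is the case analysis of the middle paragraph: $N_E$ is not monotone along arbitrary trajectories, because the orange Type II move does increase it, so the theorem is not a one-line ``bounded monotone potential'' argument; one must check from Figure~\ref{fig:motif_changes} that this is the sole endorsement-creating move and that, because $\alpha < 1$, it can always be avoided in favor of an $N_E$-decreasing resolution. Everything else---the positive-probability accounting for which node is selected and which triangle it resolves, and the standard passage from ``absorbing state reachable from every state'' to ``absorbed almost surely''---is routine.
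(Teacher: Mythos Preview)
Your proof is correct and follows essentially the same approach as the paper's: both exploit that, reading off Figure~\ref{fig:motif_changes}, the orange Type~II resolution is the \emph{only} endorsement-creating move, so when $\alpha<1$ there is from any non-equilibrium state a positive-probability step that strictly decreases the number of endorsements (equivalently, increases the number of accusations), yielding a bounded-length positive-probability path to equilibrium and hence almost-sure absorption in the finite chain. Your write-up is somewhat more explicit---naming $N_E$ as the potential, noting that $N_E=0$ forces equilibrium since every inconsistent motif contains an endorsement, and spelling out the standard Markov-chain absorption step---but the argument is the same.
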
 \label{thm:convergence_of_liad}
\begin{proof}First, consider that if $\alpha = 0$, the dynamics are only capable of changing endorsements into accusations. Consequently, the total number of accusations in the graph is non-decreasing and bounded above by $n^2$, implying that from any non-equilibrium state, there exists at least one sequence of fewer than $n^2$ steps that takes the system to equilibrium and this sequence occurs with probability greater than $p^{n^2}\frac{1}{6}^{n^2}$, where $p$ is the per step probability of sampling any given triangle. If $0<\alpha<1$, the same sequence of steps remains possible, and will occur with probability greater than $(1-\alpha)^{n^2}p^{n^2}\frac{1}{6}^{n^2}$. Since this probability is bounded below, the system eventually reaches equilibrium. \end{proof}

In contrast, when $\alpha=1$, infinite cycles of alternating edge changes are possible, for example the one in Figure \ref{fig:equil_cycles}, where three states are repeatedly cycled between, and the two absorbing states are never visited. 

\begin{figure}
    \centering
    \includegraphics[width=\textwidth]{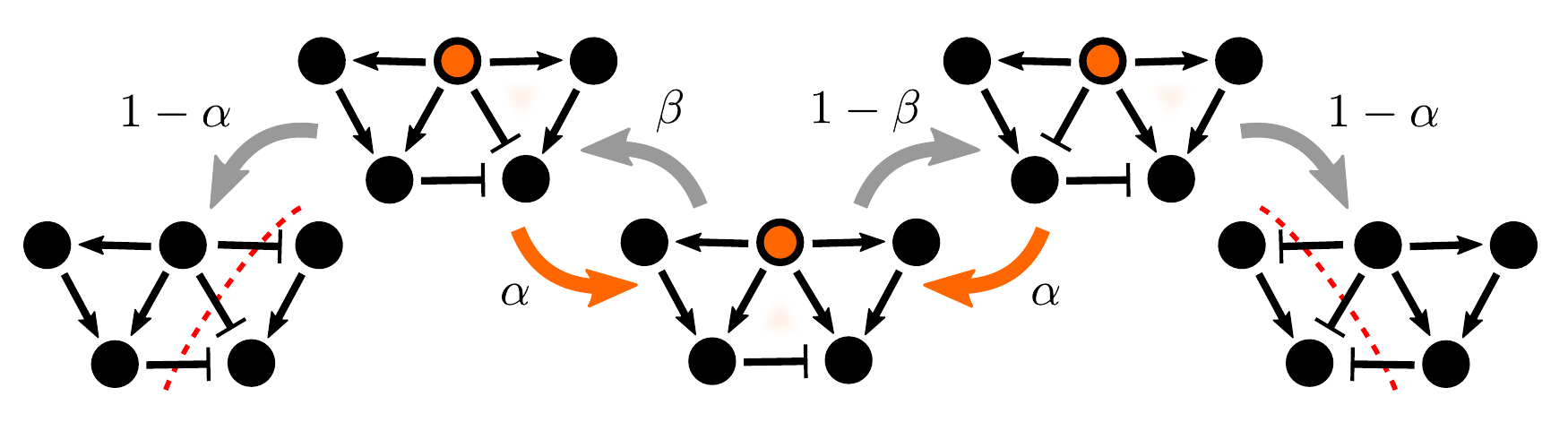}  
    \caption{A situation in IAD in which the system does not achieve equilibrium when $\alpha=1$, as the system traverses the middle three states but cannot transition to the outer, stable states.}
    \label{fig:equil_cycles}
\end{figure}

\subsection{Simulations} \label{sec:simulations}
Given the large number of possible equilibrium structures in local IAD, enumerating equilibrium states does not by itself shed light on the outcome of the dynamics. Numerical simulation suggests that local IAD on sparse networks often results in equilibrium states whose structures are difficult to summarize, but it leaves a distinct and pronounced effect on the relative proportion of endorsements to accusations. Indeed, of the inconsistent motifs shown in Figure \ref{fig:susp_motifs} only one of the possible resolutions involves making a new endorsement, while the rest all make accusations. The result of this is that accusations tend to increase as a network evolves under implication avoiding dynamics.

When the network is composed of a set of honest nodes $H$ and a cheaters $C$ then, since $H$ is perfectly consistent, IAD only affects nodes in $C$ and typically leads to an increasing number of accusations internal to $C$, effectively fracturing the endorsement structure of $C$.  

More generally, IAD is often a dynamics of community fracture. To illustrate this, consider an Erd\H{o}s-R\'enyi endorsement network of $30$ nodes and $p=27\%$. Such an endorsement graph represents a relatively close-knit community, yet if a single accusation is introduced IAD leads the community to become a web of accusations, as in Figure \ref{fig:fracture}. This skew towards accusation occurs even when the dynamics are heavily skewed towards resolving triangle II by creating an endorsement ($\alpha=.9$ in Figure \ref{fig:fracture}).

\begin{figure}
    \centering
    \includegraphics[width=1\textwidth]{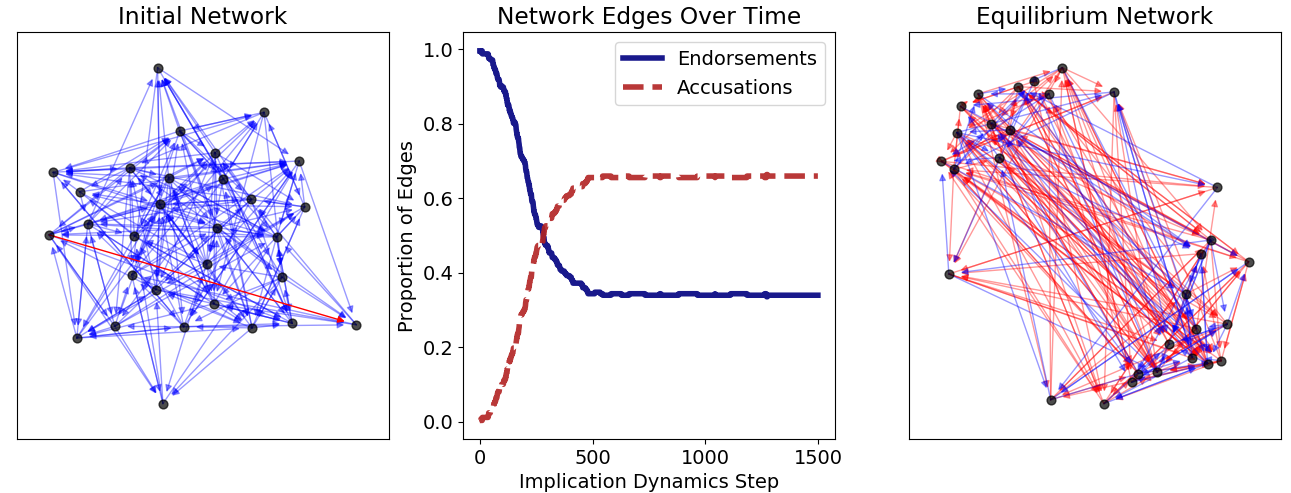}
    \caption{An illustration of the dynamics of fracture caused by local implication avoiding dynamics. A small, mostly friendly community with a single negative edge (left) swiftly devolves into a fractured, antagonistic network (right), with the number of accusations eventually exceeding the number of endorsements (center). Parameters where, $\alpha=.9$, $\beta = .5$.(Color available online.)}
    \label{fig:fracture}
\end{figure}

Interestingly, whether or not such a fracturing occurs, as well as the extent to which the network splits, depends on the density of edges and triangles in the network.  For instance, in an Erd\H{o}s-R\'enyi endorsement network the final proportion of accusations following IAD grows with the number of total edges. In particular, there is a second-order phase transition after which the number of accusations will exceed the number of endorsements in the equilibrium state for local IAD, as shown in Figure \ref{fig:phase_trans}. Such a result fits with common intuition that highly connected networks have more acrimonious fracturing than loosely connected ones.

\begin{figure}
    \centering
    \includegraphics[width=0.75\textwidth]{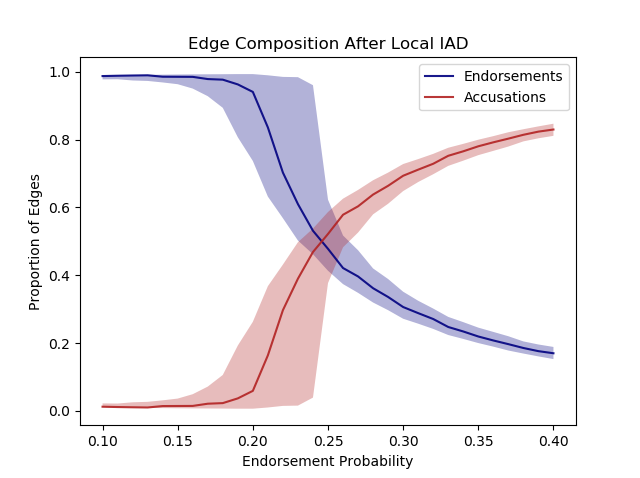}
    \caption{Relative proportion of signed edges in an Erd\H{o}s-R\'enyi endorsement network with a single accusation after evolution to equilibrium under local IAD for varying endorsement probabilities. The median and interquartile range of 1000 networks each run for 1500 iterations is plotted.}
    \label{fig:phase_trans}
\end{figure}

\subsection{Strong Implication Avoiding Dynamics}\label{sec:strong_iad}

In situations where nodes have complete information on the state of the graph, nodes may be seen as responsible for suspicious motifs involving longer endorsements paths. As a result, a node may now also alter its edges in order to resolve suspicious motifs at a variety of depths, giving rise to `strong' IAD, as shown in Figure \ref{fig:strong_changes}. A natural assumption would be that shorter range motifs resolve at a much faster rate, or entirely before, longer range motifs. This suggests that strong IAD can be thought of as a process on top of local IAD.

\begin{figure}
    \centering
    \includegraphics[width=\textwidth]{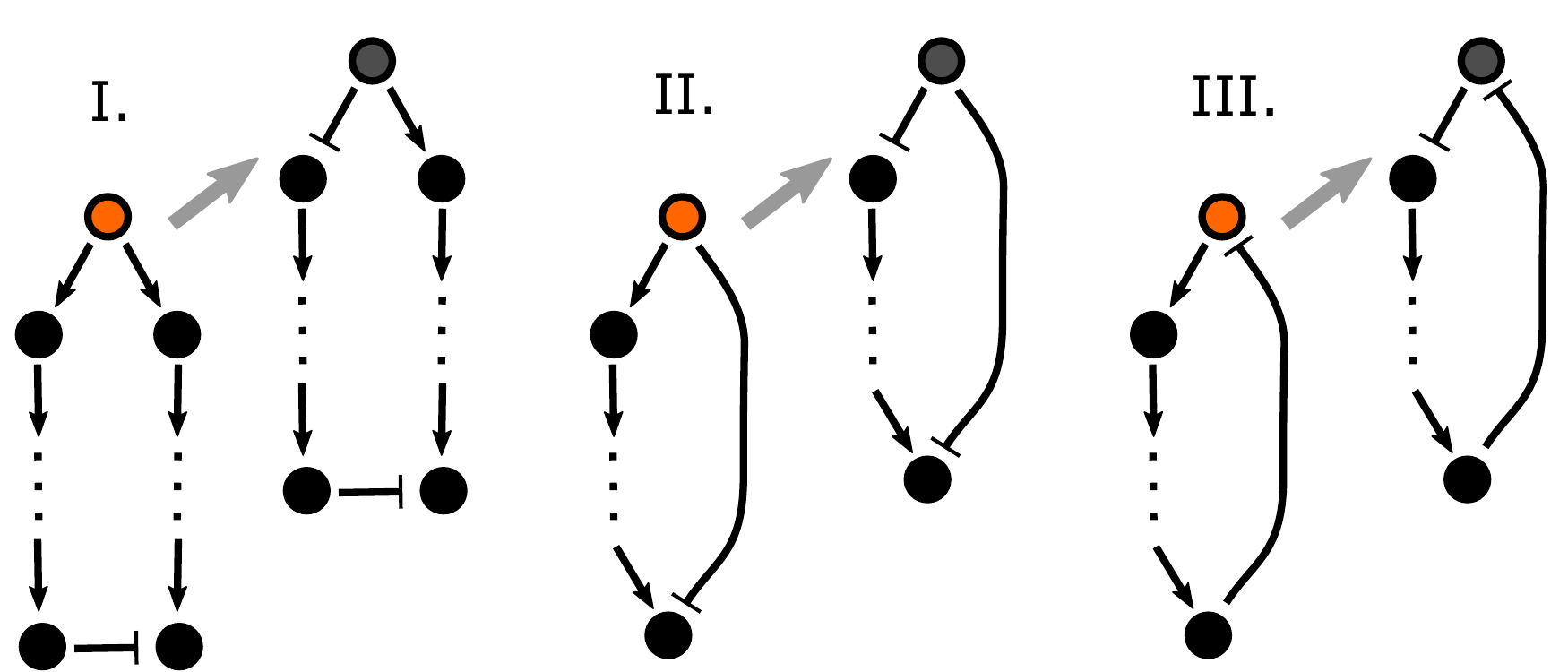}  
    \caption{Strong implication avoiding dynamics rewire edges so that nodes appear consistent at all depths.}
    \label{fig:strong_changes}
\end{figure}

One important observation about these strong implication avoiding dynamics is that they are most useful for sparse networks. In the case of complete networks the strong dynamics yield the same equilibrium state as the local dynamics, since a lack of any local inconsistencies implies a lack of any non-local inconsistencies, and vice versa. 
Indeed, while local IAD did not have a useful and succinct description of it's equilibrium state for sparse networks, the equilibrium state of strong implication avoiding dynamics does.  Namely, a network at equilibrium under strong IAD can be decomposed into several endorsement insular structures which involve no accusations between any nodes in $u, v \in \sigma(w)$ for all $w$, and no accusations between any $u$ and $v$ with $u\in \sigma(v)$, as in Figure \ref{fig:equil_structure}. 

\begin{figure}
    \centering
    \includegraphics[width=\textwidth]{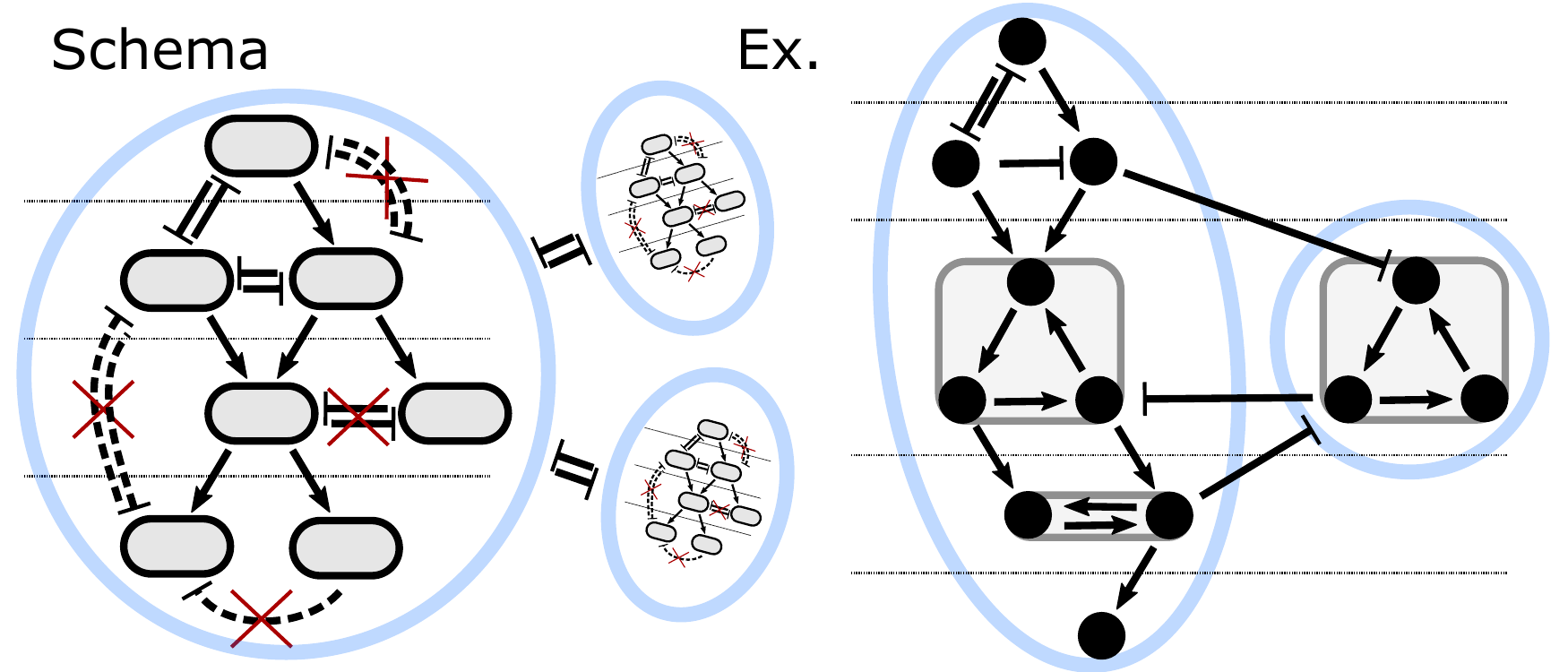} 
    \caption{A network at equilibrium  under strong implication avoiding dynamics can be separated into multiple insular trees, possibly with accusations between trees. While these trees may not be self-consistent, the allowable positions of internal accusations can be viewed in the topological sorting of the tree's condensation under endorsements (left), where each condensed meta-node is a strongly connected component of $G^\to$. For example, the graph on the right fits the schema on the left. }
    \label{fig:equil_structure}
\end{figure}

While these results are not as strong as those in Theorem \ref{thm:complete_net_equil_for_local_iad}, the equilibrium state of strong implication avoiding dynamics strikes towards the key elements of the hypotheses of Theorem \ref{thm:tree} and in a way that remains functional even on sparse, relatively error-prone networks.

\section{In-Groups and Out-Groups}

While this paper has thus far discussed `honest' nodes and `cheaters' the approaches of this paper apply equally well to scenarios where nodes wish to be seen in an `in-group' and avoid being detected as a member of an `out-group'.  In particular, there are two recursive definitions of `in-group' membership that correspond to IAD on a complete graph and to Strong IAD on sparse networks respectively.
\begin{definition}[Complete Information In-Group]
A member of a complete information in-group:
\begin{enumerate}
    \item can identify all other members of the complete in-group, 
    \item will truthfully report their identifications. 
\end{enumerate}
\end{definition}
This definition of a complete in-group implies that the in-group comprise a self-consistent, insular clique, and would only be suitable when it's common knowledge that the in-group is small enough that all its members know each other. When an in-group is large enough, or opaque enough, that in-group members will only know a subset of the in-group, the the following definition is more appropriate.
\begin{definition}[Partial Information In-Group]
A member of a partial information in-group:
\begin{enumerate}
    \item will not report an out-group member as in-group, 
    \item will not report an in-group member as out-group. 
\end{enumerate}
\end{definition}
A partial information in-group looks like the equilibrium structure of strong IAD, as in Figure~\ref{fig:equil_structure}. Notice that both these definitions work when there are multiple distinct notions of in-group, leading to multiple different communities.  

While these notions of in-group and out-group have somewhat clean equilibrium structures, in practice people almost surely have some amount of disagreement on the status and definition of in-group and out-group. Indeed, when there is disagreement between agents as to whom is in the in-group, both local and strong IAD predict an increase in the aggregate number of out-group accusations, as demonstrated in Section~\ref{sec:simulations}. 

An interesting prediction of these dynamics is that when the in-group is in equilibrium, IAD can lead to the out-group to become ever more fractured in a vain attempt to appear as if it were a  consistent in-group node.

\section{Conclusions and Future Work}

We have considered a network construction game where nodes create signed directed edges in order to convince an external observer that they are a member of the honest nodes. A preliminary examination of this setting revealed that the adversarial nodes can coordinate to confuse the observer if they number at least as many as the honest nodes. When honest nodes outnumber the adversarial nodes they can create identifiable structures through either endorsements or accusations. While these theorems establish the theoretical possibility of the observer's task, they are fragile if the honest nodes erroneously endorse a cheater or accuse another honest node and in the worst are computationally impractical. Instead, there exists an interesting set of heuristics, which look for inconsistent motifs, and use these motifs to implicate individual nodes, suggesting an implication avoiding dynamics where nodes adjust their outgoing edges to resolve any inconsistent motifs. Interestingly, these implication avoiding dynamics 
 typically increase the number of accusations they make, providing a natural dynamics of network fracture. This provides an interesting contrast with Jeremy Bentham's hypothesized Panopticon, where an external observer is hypothesized to correct undesired behavior. Instead, this setting shows a situation where external observation leads individual behavior to be increasingly accusatory, and previously healthy trusting networks can fracture into suspicion and paranoia. Perhaps, intuitively, when this happens, the effects are strongest in closer-knit groups. 

While IAD may be able to explain some network fractures, it neither can nor attempts to explain the original creation of a network. For this reason, the application of this work to situations where people sort themselves into various in-groups and out-groups may provide useful insights in empirical work, but would likely require careful thinking about the meta-dynamics of what triggers the beginning and end of IAD. For example, the announcement of an internal investigation, the revelation of a scandal, the arrival of a new community members or the election of a leader could each lead to situations where individuals are concerned about the consistency of their outgoing edges.     

One possible shortcoming of IAD is that nodes, in pursuit of appearing completely consistent, will make edge changes that temporarily increase the number of the inconsistent triangles they appear in. In contrast, future studies may wish to consider a modification of IAD in which nodes greedily seek to increase some measure of consistency. Similarly, while the current dynamics necessarily increase the number of accusations, one can imagine a different dynamics where inconsistent triangles are eliminated so that the number of accusations and endorsements is kept fixed. For example, rather than changing an endorsement to an accusation, an implicated node could rewire that endorsement to a random node in the graph.  

Another interesting approach would be consider if the observer has an algorithm which supports some notion of pro-social behavior. We have shown one instance where the algorithmic choice of the observer can lead to an explosion of accusations, but it is conceivable that if the observer picked a different metric, they could emphasize a different set of behaviors. From this viewpoint, the observer would be picking an algorithm in a multi-objective setting, where the reliability of the algorithm in detecting an adversarial set of nodes must be balanced with the effects that this choice would have node behavior. Namely, IAD represents one of many possible choices the observer has, and it is a choice that heavily affects node behavior. Alternatively, and at other extreme, if the observer where to announce that they will randomly classified nodes, irrespective of the network, then node behavior would be completely unaffected by the observer. Whether there are interesting and useful algorithms in between, and whether this trade-off is best conceptualized as a linear trade-off or as inhabiting a multi-objective space would provide insight into the nature of this problem.

In each of these cases, careful thinking about the interplay between observer and actor could lead to insightful theories about network structure.

\section*{Acknowledgments}
This work was supported by a scholarship from the Barrett, The Honors College at Arizona State University and by a New College Undergraduate Inquiry and Research Experiences (NCUIRE) Fellowship from the New College of Interdisciplinary Arts and Sciences at Arizona State University.  

\bibliographystyle{plain}
\bibliography{references}

\end{document}